\documentclass[12pt]{article}

\usepackage{amsmath,amssymb,amsthm}
\usepackage{braket}
\usepackage{geometry}
\geometry{margin=1in}
\usepackage{amsmath}
\DeclareMathOperator{\Tr}{Tr}

\title{\bf {No-Signalling Fixes the Hilbert-Space Inner Product}}
\author{Arun K Pati, \\
Centre for Quantum Technology, \\
KIIT University, Bhubanseswar, India}
\date{}

\newtheorem{theorem}{Theorem}

\begin{document}

\maketitle

\begin{abstract}
We investigate whether the inner product structure of quantum mechanics can be modified without violating fundamental physical principles. We consider a generalized inner product defined by a positive operator 
and assume local unitary dynamics, existence of entangled states and the no-signalling principle. We show that any nontrivial choice of inner product different from standard one inevitably leads to superluminal signalling, in contradiction with relativistic causality. 
Therefore, the standard Hilbert-space inner product is uniquely enforced by no-signalling. 
\end{abstract}

\section{Introduction}

The no-signalling principle occupies a central place in the conceptual and operational foundations of quantum theory. It asserts that operations performed locally on one part of a composite system cannot be used to transmit information instantaneously to another spatially separated part. While originally motivated by relativistic causality, no-signalling has emerged as a far deeper structural constraint: it severely restricts the possible forms of state spaces, probability rules, and dynamical transformations that any consistent physical theory may admit. In quantum mechanics \cite{holevo}, no-signalling is not an added postulate but rather a delicate consequence of the interplay between the Hilbert-space structure, the tensor product composition of systems, and the Born rule.

From an information-theoretic perspective, no-signalling guarantees the autonomy of local laboratories. It ensures that reduced states and local measurement statistics are well defined independently of distant choices, thereby making quantum theory compatible with operational notions of locality and experimental reproducibility. At the same time, quantum theory allows nonlocal correlations, as evidenced by Bell inequality violations \cite{bell,Tsi}, which saturate the no-signalling boundary without crossing it \cite{kai}. This coexistence of nonlocality and no-signalling is one of the most distinctive and nontrivial features of quantum mechanics, setting it apart from both classical theories and more general probabilistic models \cite{dago}.

A growing body of work has shown that even slight modifications to the mathematical structure of quantum theory can jeopardize no-signalling. Nonlinear evolutions, altered probability rules, particularly in composite and entangled systems—generically enable superluminal communication \cite{weinberg,gisin,poli,wein} unless stringent consistency conditions are imposed. These results suggest that no-signalling acts as a powerful organizing principle, tightly constraining the admissible kinematics and dynamics of physical theories. 

The importance of no-signalling extends beyond foundational concerns. It plays a crucial role in quantum information processing, cryptography, and metrology, where security, composability, and resource accounting rely implicitly on the impossibility of signalling through entanglement alone. Moreover, in emerging contexts such as modified quantum theories, effective descriptions of open systems, and attempts to unify quantum mechanics with gravity, preserving no-signalling provides a stringent consistency check on proposed extensions.

In this work, we examine the role of no-signalling as a structural principle underlying quantum theory. We analyze how it constrains the form of inner products, probability assignments, and composite-system structures, and we demonstrate that violations of these constraints generically lead to signalling in entangled systems. Our results clarify why the standard quantum formalism is remarkably rigid, and they highlight no-signalling not merely as a relativistic requirement, but as a cornerstone of the internal logical consistency of quantum mechanics itself.

The mathematical structure of quantum mechanics is rigidly constrained by physical principles, yet the necessity of the standard inner product remains a subtle foundational question. While the Hilbert space formalism is often assumed axiomatically, one might ask whether alternative geometries—mediated by a non-trivial metric operator could consistently describe physical reality. In this Letter, we demonstrate that such a modification is not merely a choice of convention but a direct violation of causality. By analyzing the interplay between entanglement, measurement, and the generalized normalization condition, we establish that any deviation from the standard metric inevitably destroys the linearity required to preserve the indistinguishability of ensembles. Consequently, the standard Hilbert space inner product is identified not as a mathematical convenience, but as the unique one that safeguards the no-signaling principle of special relativity.

Quantum mechanics and special relativity coexist peacefully only because of the precise linear structure of the former. A central pillar of this structure is the inner product, which defines probability amplitudes and ensures the unitary conservation of information. 
We demonstrate that replacing the standard inner product with a generalized form $\langle \psi | \phi \rangle_A:= \bra{\psi}A\ket{\psi}$ with $A >0$ leads to an immediate conflict with relativistic causality. Specifically, we show that any non-trivial metric $A \neq \mathbb{I}$ induces a transformation that renders orthogonal measurement ensembles distinguishable. This allows for superluminal signaling using shared entanglement, establishing that the standard quantum inner product is the unique metric consistent with a relativistic universe.

This result is important because it shows that the Hilbert-space inner product of quantum mechanics is not merely a mathematical convention, but is rigidly constrained by fundamental physical principles. Any attempt to modify the inner product—while keeping linear unitary dynamics and entanglement—inevitably introduces state-dependent normalization, leading to nonlinear evolution and superluminal signalling. Thus, relativistic causality and no-signalling uniquely select the standard inner product (up to scale), much as Lorentz invariance constrains spacetime geometry. The analysis clarifies why many proposed “generalized” or “deformed” quantum theories fail at a foundational level: even minimal departures from the standard inner product destabilize the operational equivalence of ensembles. In this sense, the Hilbert-space structure emerge not as independent axioms, but as consequences of deeper consistency requirements linking quantum mechanics and relativistic causality.


Let $\mathcal{H}$ be a Hilbert space with the standard inner product denoted by $\langle \cdot | \cdot \rangle$. Let $A$ be a linear operator on $\mathcal{H}$ such that $A$ is positive definite ($A > 0$). We define the new inner product, denoted $\langle \cdot, \cdot \rangle_A$, as a map $\mathcal{H} \times \mathcal{H} \to \mathbb{C}$ given by:

\begin{align}  
\langle \phi, \psi \rangle_A := \langle \phi | A | \psi \rangle
\end{align}


To prove $\langle \phi , \psi \rangle_A$ is a valid inner product, we must verify the three axioms of a Hilbert space inner product: Conjugate Symmetry, Linearity, and Positive Definiteness.\\

\noindent
Axiom 1: Conjugate Symmetry- We need to show that $\langle \psi , \phi \rangle_A = \overline{\langle \phi, \psi \rangle_A}$.\\

\noindent
Proof: By definition, we have $\langle \psi, \phi \rangle_A = \langle \psi | A | \phi \rangle$.
Using the property of the adjoint $A^\dagger$ in the standard inner product, we have $\langle \psi | A | \phi \rangle = \langle A^\dagger \psi | \phi \rangle = \overline{\langle \phi | A^\dagger \psi \rangle}$.
Since $A$ is a positive operator ($A > 0$), and is Hermitian (self-adjoint), it holds that 
 $ \overline{\langle \phi | A | \psi \rangle} = \overline{\langle \phi, \psi \rangle_A}$. Thus, conjugate symmetry holds.\\

\noindent
Axiom 2: Linearity-We need to show linearity in the second argument: $\langle \phi, c_1 \psi_1 + c_2 \psi_2 \rangle_A = c_1 \langle \phi, \psi_1 \rangle_A + c_2 \langle \phi, \psi_2 \rangle_A$.\\

\noindent
Proof: Consider $\langle \phi, c_1 \psi_1 + c_2 \psi_2 \rangle_A = \langle \phi | A (c_1 |\psi_1\rangle + c_2 |\psi_2\rangle)$.
Since $A$ is a linear operator, above can be written as $ \langle \phi | (c_1 A |\psi_1\rangle + c_2 A |\psi_2\rangle)$.
Using the linearity of the standard inner product, we have 
$$\langle \phi, c_1 \psi_1 + c_2 \psi_2 \rangle_A
= c_1 \langle \phi, \psi_1 \rangle_A + c_2 \langle \phi, \psi_2 \rangle_A$$
Thus, linearity holds.\\

\noindent
Axiom 3: Positive Definiteness-We need to show that $\langle \psi, \psi \rangle_A \geq 0$, and that it equals $0$ if and only if $|\psi\rangle = 0.$\\

\noindent
Proof: Consider $$\langle \psi, \psi \rangle_A = \langle \psi | A | \psi \rangle$$. By the definition of a positive definite operator ($A > 0$), for any non-zero vector $|\psi\rangle \neq 0$, the expectation value $\langle \psi | A | \psi \rangle > 0$.
Therefore, $\langle \psi, \psi \rangle_A \geq 0$ for all $| \psi \rangle$.
If $\langle \psi, \psi \rangle_A = 0$, then $\langle \psi | A | \psi \rangle = 0$. Since $A$ is strictly positive definite, this is only possible if $|\psi\rangle = 0$. Thus, positive definiteness holds.

Since the map satisfies all three axioms, $\langle \phi | A | \psi \rangle$ defines a valid inner product on the space.
This is often called the skewed Inner Product in physics. 

\begin{theorem}[Non-standard inner products imply signalling]
\label{thm:inner-product-signalling}
Let the inner product on a Hilbert space ${\cal H}$ be modified to
\[
\langle \psi ,  \phi \rangle_A := \langle \psi | A | \phi \rangle,
\]
where $A>0$ is a fixed positive operator.  
If $A \neq I$, while allowing standard local unitaries and entanglement, then the no-signalling principle is violated.
\end{theorem}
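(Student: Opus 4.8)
The plan is to show that, once the inner product is deformed by $A$, Bob's measurement statistics stop being a linear functional of his reduced state and instead become sensitive to the particular ensemble decomposition that Alice's choice of local measurement steers him into. In ordinary quantum mechanics the no-signalling principle is precisely the statement that every decomposition of a fixed reduced density operator $\rho_B$ is operationally indistinguishable (the Hughston--Jozsa--Wootters steering property), so that Alice's basis choice leaves Bob's statistics invariant. I would therefore start from a shared entangled state, concretely $\ket{\Psi} = \tfrac{1}{\sqrt{2}}(\ket{00}+\ket{11})$, and recall that Alice's local measurement in a basis $\{\ket{a_i}\}$ steers Bob into an ensemble $\{(p_i,\ket{\phi_i})\}$, with different bases producing different decompositions of one and the same $\rho_B = I/2$.

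The decisive step is to write down the Born rule forced by the skewed inner product and to locate the nonlinearity it introduces. For an $A$-orthonormal measurement $\{\ket{e_i}\}$, characterised by $\langle e_i|A|e_j\rangle = \delta_{ij}$, each conditional state $\ket{\psi}$ must first be renormalised in the $A$-norm, so the probability rule reads
\[
P(i\mid\psi) \;=\; \frac{|\langle e_i|A|\psi\rangle|^2}{\langle \psi|A|\psi\rangle}.
\]
The essential observation is that the normalising factor $\langle\psi|A|\psi\rangle$ is a genuinely state-dependent function of $\ket{\psi}$ whenever $A\neq cI$; consequently the ensemble average $P(i)=\sum_k p_k\,|\langle e_i|A|\psi_k\rangle|^2/\langle\psi_k|A|\psi_k\rangle$ \emph{cannot} be rewritten through $\rho_B=\sum_k p_k\ket{\psi_k}\bra{\psi_k}$ alone. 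This is exactly the mechanism flagged in the introduction: the deformation destroys the linearity on which ensemble-indistinguishability rests, so distinct decompositions of the same $\rho_B$ generically give distinct $P(i)$.

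I would then make this quantitative on a single qubit with $A=\mathrm{diag}(a_0,a_1)$ and Bob's fixed $A$-orthonormal measurement $\{\ket{e_0}=\ket{0}/\sqrt{a_0},\,\ket{e_1}=\ket{1}/\sqrt{a_1}\}$. Alice's computational-basis choice steers Bob into $\{(\tfrac12,\ket{0}),(\tfrac12,\ket{1})\}$ and yields $P(0)=\tfrac12$, whereas her diagonal-basis choice steers him into $\{(\tfrac12,\ket{+}),(\tfrac12,\ket{-})\}$ and yields $P(0)=a_0/(a_0+a_1)$, even though both ensembles carry the standard reduced state $\rho_B=I/2$. These two numbers coincide if and only if $a_0=a_1$, i.e.\ $A=cI$; otherwise Bob reads off Alice's basis choice from his local statistics alone, which is superluminal signalling. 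The converse direction is immediate: for $A=cI$ the denominator is constant, linearity is restored, and ensemble-indistinguishability together with no-signalling returns, so $A\propto I$ is forced.

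The main obstacle I anticipate is conceptual rather than computational, and it must be handled carefully to keep the argument honest: one has to fix precisely which structures remain standard. If one simultaneously promotes the dynamics to $A$-unitaries and re-weights all ensembles with $A$-normalised probabilities, the skewed theory becomes similar to standard quantum mechanics through the map $\ket{\psi}\mapsto A^{1/2}\ket{\psi}$, and no-signalling is trivially recovered. The theorem therefore rests essentially on its hypothesis that \emph{standard} linear unitaries, together with the standard tensor-product and partial-trace structure, are retained while only the inner product (hence the normalisation entering the Born rule) is deformed. The hard part of the proof is to argue that, under this hypothesis, the state-dependent factor $\langle\psi|A|\psi\rangle$ cannot be absorbed into a redefinition of the states or the dynamics and must genuinely enter Bob's outcome probabilities, so that the induced nonlinearity -- and with it the signalling -- is irreducible.
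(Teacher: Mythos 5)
Your proposal is correct and arrives at the same conclusion, but by a genuinely different mechanism than the paper's proof. The paper makes essential use of the \emph{dynamical} hypothesis: Alice applies a local unitary $U$ with $[U,A]\neq 0$ (concretely the Hadamard gate against $A=\mathrm{diag}(1,\lambda)$), renormalizes each pure state of the steered ensemble individually by the state-dependent factor $\langle\psi|U^\dagger A U|\psi\rangle$, obtains two distinct post-evolution density matrices $\rho_A^{(Z)}\neq\rho_A^{(X)}$ from the two decompositions of $I/2$, and only then distinguishes them with a measurement evaluated via $\Tr(AM\rho)$. You dispense with the unitary entirely and locate the nonlinearity one level earlier, in the deformed Born rule itself: the denominator $\langle\psi|A|\psi\rangle$ in $P(i\mid\psi)=|\langle e_i|A|\psi\rangle|^2/\langle\psi|A|\psi\rangle$ already prevents the ensemble-averaged statistics from being a linear functional of $\rho_B$, so the two steered decompositions of $I/2$ yield $1/2$ versus $a_0/(a_0+a_1)$ directly. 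Your computation is correct, and your version is more economical (purely kinematic, no dynamics needed) and sharper on one point: you correctly conclude that only $A\propto I$ is forced, whereas the theorem as stated says $A=I$ and the paper's own proof quietly switches to ``not proportional to the identity.'' The paper's route, in exchange, actually exercises the stated hypothesis ``allowing standard local unitaries'' and shows the inconsistency arises from renormalization under evolution even before one commits to a particular measurement rule. You also explicitly name the loophole that the paper only gestures at in its discussion of PT-symmetric theories --- that the similarity map $\ket{\psi}\mapsto A^{1/2}\ket{\psi}$ would trivialize the result if all structures were deformed coherently --- and you correctly identify that neither argument fully closes it beyond invoking the hypothesis that the tensor product, partial trace, and unitaries remain standard; this residual ambiguity (in particular, whether the steering party's outcome probabilities should themselves be computed with the deformed rule) is shared by both your proposal and the paper.
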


\begin{proof}


Let $A>0$ be a fixed positive operator, not proportional to the identity.  
Probabilities are computed using the modified norm
\[
\|\psi\|_A^2 := \langle \psi | A | \psi \rangle .
\]
Physical pure states must satisfy the normalization condition
\[
\langle \psi | A | \psi \rangle = 1,
\]
which replaces the usual Born rule normalization.


Let $U$ be a unitary acting linearly as $\ket{\psi} \mapsto U \ket{\psi}$.
After evolution, the norm becomes $\| U \psi \|_A^2
= \langle \psi | U^\dagger A U | \psi \rangle $. If $A \neq c I$, then there exists at least one unitary $U$ such that
$U^\dagger A U \neq A$,
and therefore $\| U \psi \|_A^2 \neq \| \psi \|_A^2$.
Hence, unitary operations change probabilities.


To restore physical meaning, the post-unitary state must be renormalized:
\begin{align}
\ket{\psi} \longmapsto
\frac{U \ket{\psi}}
{\sqrt{\langle \psi | U^\dagger A U | \psi \rangle}} = \frac{1}{\sqrt{N(\psi)}} U \ket{\psi},
\end{align}
where $N(\psi) := \langle \psi | U^\dagger A U | \psi \rangle$.
The resulting evolution map is nonlinear, since in general
\begin{align}
N(\alpha \ket{\psi} + \beta \ket{\phi})
\neq
\alpha N(\psi) + \beta N(\phi).
\end{align}
Now, consider a maximally entangled state shared by Alice and Bob as given by 
\[
\ket{\Psi}_{AB}
=
\frac{1}{\sqrt{2}}
\bigl( \ket{0}_A \ket{0}_B + \ket{1}_A \ket{1}_B \bigr).
\]
Bob is spacelike separated from Alice. Bob prepares two operationally equivalent ensembles, i.e., he chooses one of two measurements: (i) Computational basis $\{\ket{0},\ket{1}\}$ and  (ii) Diagonal basis $\{\ket{+},\ket{-}\}$. 
In both cases, Alice’s reduced state is $\rho_A = \frac{I}{2}$.

 Let Alice applies a local unitary not commuting with $A$.
Due to the modified inner product, each pure state must be renormalized individually and  different ensembles evolve differently.
Because normalization depends on the state, the evolution does not respect convex linearity:
\begin{align}
\mathcal{N}\!\left( \sum_i p_i \ket{\psi_i}\!\bra{\psi_i} \right)
\neq
\sum_i p_i \mathcal{N}(\ket{\psi_i}\!\bra{\psi_i}),
\end{align}
where $\mathcal{N}$ is the effective nonlinear state-update map induced by
state-dependent renormalization under a modified inner product.
Hence, the $\{\ket{0},\ket{1}\}$ ensemble and the $\{\ket{+},\ket{-}\}$ ensemble evolve into different final density matrices, despite starting from the same $\rho_A$. Alice now measures her system and observes outcome statistics that depend on Bob’s choice of measurement basis. This dependence occurs without any classical communication. Thus, Bob can signal instantaneously to Alice.

Thus any modification of the inner product that makes normalization state-dependent introduces nonlinearity.  
Nonlinearity together with entanglement inevitably enables superluminal signalling. Therefore,
no-signalling implies that $A =  I$.
The standard Hilbert-space inner product is not a convention but is enforced by relativistic causality.
\end{proof}

\section{Example}
Here, we give a specific example for two-qubit entangled state.
Consider two parties, Alice and Bob, sharing a maximally entangled Bell state
\begin{align}
\ket{\Psi} = \frac{1}{\sqrt{2}} (\ket{0}_A \ket{0}_B + \ket{1}_A \ket{1}_B).
\end{align}
Bob is far away from Alice and performs a measurement on his qubit. Alice applies the Hadamard gate $H$ on her qubit and  
$ [H,A]\neq 0 $ for  $\lambda\neq 1 $.  Let us consider two measurement choices for Bob.

\subsubsection{Case 1: Diagonal basis $\{|0\rangle,|1\rangle\}$}
 First, consider the case where Bob measures his qubit in the computational basis.  
After Bob’s measurement, 
Alice’s ensemble is given by 
\[
\mathcal{E}_Z
=
\Bigl\{\tfrac12,|0\rangle;\ \tfrac12,|1\rangle\Bigr\}.
\]
Now, Alice applies $H$ and renormalizes. Starting from $|0\rangle$, we have 
$H|0\rangle = |+\rangle = \frac{|0\rangle+|1\rangle}{\sqrt{2}}$.
The normalization factor is
$\langle +|A|+\rangle = \frac{1}{2}(1+\lambda)$.
The renormalized state is given by 
$|+\rangle_A = \frac{|+\rangle}{\sqrt{(1+\lambda)/2}}$.
Similarly, starting from $|1\rangle$, we hve 
$H|1\rangle = |-\rangle = \frac{|0\rangle-|1\rangle}{\sqrt{2}}$.
The normalization factor is given by $\langle -|A|-\rangle = \frac{1}{2}(1+\lambda)$.
The renormalized state is now
$|-\rangle_A = \frac{|-\rangle}{\sqrt{(1+\lambda)/2}}$.
Therefore, the final density matrix is given by
\[
\rho_A^{(Z)}
=
\frac12 |+\rangle_A\!\langle+|
+
\frac12 |-\rangle_A\!\langle-|
=
\frac{1}{1+\lambda}
\begin{pmatrix}
1 & 0 \\
0 & 1
\end{pmatrix}.
\]

\subsubsection{Case 2: Diagonal basis $\{|+\rangle,|-\rangle\}$}
Next, consider the measurement by Bob in the $X$ basis.  After Bob’s measurement, 
 Alice’s ensemble is given by 
\[
\mathcal{E}_X
=
\Bigl\{\tfrac12,|+\rangle;\ \tfrac12,|-\rangle\Bigr\}.
\]

Now, Alice applies $H$ and renormalizes the states. Starting from $|+\rangle$, we have 
$H|+\rangle = |0\rangle$. The normalization factor is $\langle 0|A|0\rangle = 1$, so the state is unchanged.
Starting from $|-\rangle$, we have  $H|-\rangle = |1\rangle$. The normalization factor is given by 
$\langle 1|A|1\rangle = \lambda$. The renormalized state is now
$|1\rangle_A = \frac{|1\rangle}{\sqrt{\lambda}}$. Therefore, the
final density matrix is given by
\[
\rho_A^{(X)}
=
\frac12 |0\rangle\!\langle 0|
+
\frac12 \frac{|1\rangle\!\langle 1|}{\lambda}
=
\begin{pmatrix}
\frac12 & 0 \\
0 & \frac{1}{2\lambda}
\end{pmatrix}.
\]

Now, Alice measures the projector $M = |0\rangle\!\langle 0|$. The corresponding probability for the Case 1 is 
$P_Z(0) = \mathrm{Tr}\!\left(AM\rho_A^{(Z)}\right)
= \frac{1}{1+\lambda}$ and the probability for the Case 2  is 
$P_X(0) = \mathrm{Tr}\!\left(AM\rho_A^{(X)}\right) = \frac{1}{2}$. For $\lambda\neq 1$, $P_Z(0) \neq P_X(0)$.
Hence, Alice can distinguish Bob’s measurement choice using a single local measurement, implying superluminal signalling.

Thus, for Alice to be unable to distinguish Bob's choice, we must have $\rho_1 = \rho_2$. 
Unless $\lambda = 1$ (which means $A=I$), we find that $\rho_1 \neq \rho_2$.
If $A \neq I$, Alice can simply measure the population of her qubit. If she sees the distribution corresponding to $\rho_1$, she knows Bob measured in the Standard Basis. If she sees $\rho_2$, she knows he measured in the Diagonal Basis.
This can lead to super luminal signaling. To prevent this, nature enforces $A=I$, hence the standard Hilbert space inner product.

\section{Discussion and Conclusions}
We have shown that under the assumptions of linear unitary dynamics, operational consistency with entanglement, and no-signalling, the inner product structure of quantum mechanics is uniquely fixed (up to scale). Any attempt to define a modified inner product using a nontrivial positive operator $A$ introduces nonlinearity, and allows superluminal signalling. Hence, the 
standard Hilbert-space inner product is enforced by fundamental physical principles.

To understand why the non-standard inner product leads to signalling, let us  visualize the Bloch Sphere.
In standard Quantum Mechanics ($A=I$),  for a qubit, the state space is a perfect sphere. Rotations (unitary operations) move states on the surface of the Bloch sphere, keeping the distance from the origin (norm) constant.
In the modified inner product ($A \neq I$): The ``unit norm'' condition $\bra{\psi}A\ket{\psi}=1$ defines an ellipsoid 
(a squashed sphere). When Bob measures in different bases, he prepares mixtures that lie on different axes of this ellipsoid.
Because the shape is not a sphere, the ``center of mass'' of the mixture on the long axis is different from the center of mass of the mixture on the short axis. The fact that the standard traces are weird ($\neq 1$) is the mathematical symptom of the geometry being warped. This warping allows Alice to distinguish the ensembles, leading to the violation of no-signalling.

A well-studied instance closely related to our result involves PT-symmetric quantum mechanics \cite{carl} which is a non-Hermitian extension of 
standard QM. In  PT-symmetric quantum mechanics one uses a non-standard inner product to restore unitary evolution and norm preservation. Several works showed that 
if one naively uses conventional inner product structure in PT-symmetric evolutions for bipartite entangled systems, then entanglement can be changed under local operation \cite{akp}, the no-signalling principle can be violated \cite{pt}. Therefore, previous results can be now understood as special cases of this more general result.  To preserve no-signalling, one must adopt a new, consistent inner product (such as the CPT inner product) applied uniformly to both parties. Our result is not about arbitrary inner products per se, but they demonstrate the general principle:
Changing the inner product without ensuring its compatibility with the tensor-product structure and operational probability rule can lead to signalling. Any theory that lacks consistency will violate the no-signalling principle.

In future, it will be worth investing if the Born rule as the only consistent probability assignment also follows from the no-signalling. If answer comes in favor of this, then that can establishe no-signalling as a unifying physical principle underlying both the geometry and the probabilistic structure of quantum mechanics, sharpening and extending earlier reconstructions of quantum theory.

\section{Appendix}

\subsection{Trace Condition} 

One may notice that the standard trace $\Tr(\rho)$ is not 1. This is actually a feature, not a bug, of the modified theory we are analyzing. It confirms that we have fundamentally changed the geometry of the state space.
At first, it appears that the trace ``looks'' wrong, but it is actually correct in the new theory.
The ``new'' normalization rule will ensure that the density matrices are normalised indeed.

In standard Quantum Mechanics, the condition $Tr(\rho)=1$ comes from the requirement that probabilities sum to one, i.e., 
\[
\sum_i P_i = \sum \langle i | \psi \rangle \langle \psi | i \rangle  = \Tr(\ket{\psi}\bra{\psi}) = 1.
\]
However, in this hypothetical theory, we defined the inner product as $\bra{\psi}A\ket{\psi}$. Therefore, the condition that probabilities sum to one changes to:
\[
\Tr(A\rho) = 1.
\]
Let us check if our derived matrices satisfy this new physical consistency condition. For the Case 1, we have 
\[
\rho_1 = \begin{pmatrix} \frac{1}{1+\lambda} & 0 \\ 0 & \frac{1}{1+\lambda} \end{pmatrix}, 
\quad A = \begin{pmatrix} 1 & 0 \\ 0 & \lambda \end{pmatrix}.
\]
Now, we find the weighted trace. Since 
\[
A\rho_1 = \frac{1}{1+\lambda} \begin{pmatrix} 1  & 0 \\ 0 & \lambda \end{pmatrix},
\]
we have $\Tr(A\rho_1) = 1$ and $\rho_1$ is a valid, normalized state in this new theory. For the Case 2, we have  
\[
\rho_2 = \begin{pmatrix} \frac{1}{2} & 0 \\ 0 & \frac{1}{2\lambda} \end{pmatrix}, \quad A = \begin{pmatrix} 1 & 0 \\ 0 & \lambda \end{pmatrix}.
\]
To calculate the weighted trace, we have 
\[
A\rho_2 = \begin{pmatrix} \frac{1}{2} & 0 \\ 0 & \frac{1}{2} \end{pmatrix}.
\]
Hence, $\Tr(A\rho_2) = 1$ and $\rho_2$ is also a valid, normalized state in this new theory. However, $\rho_1$ and $\rho_2$ are different matrices, i.e., $ \rho_1 \neq \rho_2$.  In any linear physical theory, the statistics of a measurement for an observable $M$ are determined by the density matrix 
(e.g., $\langle M \rangle = Tr(AM\rho)$).
Since the matrices are different, there exists a measurement $M$ that gives different results for $\rho_1$ and $\rho_2$.
Alice simply performs this measurement. If she gets result $Z$, she knows Bob measured in the standard basis. 
If she gets result $X$, she knows Bob measured in the diagonal basis.
Since she can tell the difference instantly, this can violate the no-signalling condition.

\subsection{The Nonlinear State-Update Map }

We denote by $\mathcal{N}$ the \emph{effective nonlinear state-update map} induced by
state-dependent renormalization under a modified inner product.
To see why this map is non-linear, fix a positive operator $A > 0$ on a qubit Hilbert space $\mathcal{H} \cong \mathbb{C}^2$,
with $A \neq c\,\mathbb{I}$, for any constant $c>0$. The inner product is defined as
$\langle \psi | \phi \rangle_A := \langle \psi | A | \phi \rangle$,
and physical pure states satisfy the normalization condition $\langle \psi | A | \psi \rangle = 1$.

First, consider the case of pure state evolution and renormalization. Let $U \in SU(2)$ be a local unitary such that
$[U,A] \neq 0$. After applying $U$, the state $|\psi\rangle$ is no longer normalized with respect to
the $A$-inner product. The physically meaningful post-evolution state must therefore be
renormalized. We define the nonlinear map $\mathcal{N}$ acting on pure states as
\begin{equation}
\mathcal{N}\!\left( \ket{\psi}\!\bra{\psi} \right)
:=
\frac{U \ket{\psi}\!\bra{\psi} U^\dagger}
{\langle \psi | U^\dagger A U | \psi \rangle}.
\label{eq:N_pure}
\end{equation}
The denominator 
\begin{equation}
N(\psi) := \langle \psi | U^\dagger A U | \psi \rangle
\label{eq:normalization}
\end{equation}
is a state-dependent normalization factor. Because $[U,A]\neq 0$, this factor depends nontrivially on $|\psi\rangle$.

Next, we can define its extension to mixed states via ensembles representation.
Let a mixed state be represented operationally by an ensemble
\[
\rho = \sum_i p_i \ket{\psi_i}\!\bra{\psi_i},
\qquad
p_i \ge 0,\quad \sum_i p_i = 1.
\]
Operationally, Alice must apply the transformation to each pure state in the ensemble
and then re-average the outcomes. The physically relevant evolution rule is therefore
\begin{equation}
\mathcal{N}(\rho)
:=
\sum_i p_i
\frac{U \ket{\psi_i}\!\bra{\psi_i} U^\dagger}
{\langle \psi_i | U^\dagger A U | \psi_i \rangle}.
\label{eq:N_mixed}
\end{equation}
This ensemble-level rule follows directly from the requirement that each pure state
be individually normalized according to the modified inner product.
However, this map fails to be linear and brings ensemble dependence.

Now compare the following two expressions:

\begin{enumerate}
\item Applying $\mathcal{N}$ to the mixed state as a whole:
\[
\mathcal{N}\!\left( \sum_i p_i \ket{\psi_i}\!\bra{\psi_i} \right).
\]

\item Taking the mixture of individually transformed pure states:
\[
\sum_i p_i \, \mathcal{N}\!\left( \ket{\psi_i}\!\bra{\psi_i} \right).
\]
\end{enumerate}
Using Eqs.~\eqref{eq:N_pure}--\eqref{eq:N_mixed}, these expressions are \emph{not equal} because
the normalization factors
\[
\frac{1}{\langle \psi_i | U^\dagger A U | \psi_i \rangle}
\]
depend explicitly on the index $i$. Hence,
\begin{equation}
\mathcal{N}\!\left( \sum_i p_i \ket{\psi_i}\!\bra{\psi_i} \right)
\neq
\sum_i p_i \, \mathcal{N}\!\left( \ket{\psi_i}\!\bra{\psi_i} \right).
\label{eq:nonlinearity}
\end{equation}
Equation~\eqref{eq:nonlinearity} expresses the fundamental failure of ensemble equivalence.
The map $\mathcal{N}$ depends on the particular ensemble decomposition of $\rho$ and therefore
does not define a linear map on density operators.

\end{document}